\def\qed{$\Box$}
\begin{document}

\title{New Online Algorithm for Dynamic Speed Scaling with Sleep State}
\author{Gunjan Kumar, Saswata Shannigrahi}
\institute{Indian Institute of Technology Guwahati, India. \email{\{k.gunjan,saswata.sh\}@iitg.ernet.in} }
\maketitle


\begin{abstract} 
In this paper, we consider an energy-efficient scheduling problem where $n$ jobs $J_1, J_2, \ldots, J_n$
need to be executed such that the total energy usage of these jobs is minimized while ensuring that
each job is finished within it's deadline. We work
in an online setting where a job is known only at it's arrival time, along with
it's processing volume and deadline. In such a setting, the currently best-known algorithm by Han
et al. \cite{han} provides a competitive ratio max $\{4, 2 + {\alpha}^{\alpha}\}$ 
of energy usage. In this paper, we
present a new online algorithm SqOA which provides a competitive ratio  
max $\{4, 2 + (2-1/{\alpha})^\alpha 2^{\alpha-1}\}$ of energy usage. 
For $\alpha \geq 3$, the competitive ratio of our algorithm is better than that of
any other existing algorithms for this problem. 

\noindent \textbf{Keywords:} Energy efficient algorithm, online algorithm, scheduling algorithm
\end{abstract}

\section{Introduction}

Over the last few decades, the energy cost for computing has increased exponentially. 
For example, the electricity cost imposes a significant strain 
on the budget of data centers where CPUs account for $50$ to $60$ percent of the energy
consumption. The related problems include the heat generated, leading to a
reduced reliability of hardware components. As a result, it has become an important 
challenge for the algorithm researchers to design algorithms which are efficient with respect to
the energy usage by the CPU.

Yao, Demers and Shenker \cite{yao} were the first one to investigate the following scheduling problem
where the objective is to minimize the total energy consumed by the processing of the jobs. The general 
setting of their problem, on which we also work on, is explained below. 

Assume that a processor can run at any arbitrary speed, and consider $n$ jobs $J_1, \ldots, J_n$
that have to be processed by such a processor. Each job $J_i$ has a release time $r_i$, a 
deadline $d_i$ and a processing volume $w_i$. A job $J_i$ must be executed in the time interval 
$[r_i , d_i]$. The processing volume $w_i$ is the amount of work that must be completed to finish $J_i$. 
It's also assumed that the preemption of jobs is allowed, i.e. the processing of a job may be stopped and resumed later. If the processor is running at a speed $s$, the required power $P(s)$ is assumed to be equal to 
$s^{\alpha}$ where $\alpha > 1$ is a constant. (In practice, $\alpha = 3$.) The energy consumed by the execution of the algorithm is the amount of power integrated over the execution period of the algorithm. 

Under the above assumptions, Yao et al. \cite{yao} developed an $O(n^3)$-time algorithm (later improved to 
$O(n^2 \log n)$-time) to schedule $n$ jobs such that the total energy consumption is minimized. However, this algorithm (called YDS algorithm) works {\it offline}, i.e., the release time, deadline and the 
processing volume of each job is known
from the beginning. In an online version of the problem, a job $J_i$ is only known 
at the time of it's arrival, when the deadline $d_i$ and the processing volume $w_i$ are also
revealed. An online algorithm is called $c$-competitive if, for any job sequence, the total energy consumption of the algorithm is at most $c$ times that of an optimal offline algorithm for the same set of jobs.

In the same paper \cite{yao} as above, Yao et al. devised two online algorithms for the energy-efficient scheduling problem described above. The first algorithm, called {\it Average Rate}. has a competitive ratio at most $2^{\alpha - 1} {\alpha}^{\alpha}$, for any $\alpha \geq 2$. The competitive ratio of the second algorithm, called {\it Optimal Available (OA)}, is exactly ${\alpha}^{\alpha}$, and therefore the second algorithm is better
than the first one in terms of competitiveness. Bansal et al. \cite{bansal_kimbrel} provided an algorithm BKP which improved the competitive ratio to $2 (\frac{\alpha}{\alpha - 1})^{\alpha} e^{\alpha}$. For small values of $\alpha$, Bansal et al. \cite{bansal_chan_pruhs} provides a better competitive algorithm {\it qOA} whose competitiveness is at most $\frac{4^{\alpha}}{2 e^{\frac{1}{2}}{\alpha}^{\frac{1}{4}}}$.

Let us describe the ideas behind the two algorithms {\it Optimal Available (OA)} and {\it qOA} as these two
algorithms are relevant to the work done in this paper. The algorithm OA can be described as 
follows: whenever a new job arrives, the algorithm computes an optimal schedule for the currently available unfinished jobs. The optimal schedule for the currently available unfinished jobs can be computed using the YDS algorithm mentioned earlier. The algorithm {\it qOA} slightly modifies the algorithm {\it OA}.
It sets the speed of the processor to be $q \geq 1$ times the speed that the algorithm {\it OA} would run
in the current state. Setting the value of $q$ to be $2 - \frac{1}{\alpha}$, this algorithm achieves a competitive ratio of $\frac{4^{\alpha}}{2 e^{\frac{1}{2}}{\alpha}^{\frac{1}{4}}}$.

\subsection{Our contribution}

Irani et al. \cite{irani} started the investigation of the scenario
where a processor can be transitioned into a sleep state in which the energy consumption is $0$.
In the active state, the power consumption is given by the equation $P(s) = s^{\alpha} + g$,
where $g > 0$. One can note here that the power consumption is strictly greater than $0$ when the processor
is in an active state. However, it's not always beneficial to send a processor to the sleep state 
when it has no work to do, because $L > 0$ amount of wake-up energy is needed to 
wake-up the processor from the sleep state to the active state. Accordingly, an online algorithm needs
to decide when to send a processor to the sleep state, and when to wake-up the processor back to 
the active state, in addition to determining the speed at every moment during the active state.

Irani et al. \cite{irani} devised an online algorithm which is 
$(2^{2 \alpha - 2} {\alpha}^{\alpha} + 2^{\alpha -1} + 2)$ competitive for energy. Han et al. \cite{han} developed an algorithm SOA which improves the competitive ratio to $\{4, 2 + {\alpha}^{\alpha}\}$. 
In this paper, 
we combine the ideas used in the algorithms qOA \cite{bansal_chan_pruhs} and SOA \cite{han}, and provide an algorithm SqOA with a competitive ratio of max $\{4, 2 + (2-1/{\alpha})^\alpha 2^{\alpha-1}\}$. Note that our algorithm has a better competitive ratio than SOA for $\alpha \geq 3$. To summarize, we would be proving the following
theorem in this paper.

\begin{theorem}
\label{maintheorem}
SqOA achieves the competitive ratio max $\{4, 2 + (2-1/{\alpha})^\alpha 2^{\alpha-1}\}$ for any 
$\alpha > 1$.
\end{theorem}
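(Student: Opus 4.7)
The plan is to prove the bound via a potential-function argument that combines the qOA analysis of Bansal, Chan and Pruhs with the sleep-state charging of Han et al. Write $E(\mathrm{SqOA}) = E_{\mathrm{dyn}} + E_{\mathrm{stat}} + E_{\mathrm{wake}}$, the three summands corresponding to the $s^\alpha$ term, the constant power $g$ drawn while in the active state, and the $L$-cost per wake-up. The goal is to show $E_{\mathrm{dyn}} \le (2-1/\alpha)^\alpha 2^{\alpha-1} \cdot E(\mathrm{OPT})$ and $E_{\mathrm{stat}} + E_{\mathrm{wake}} \le 2 \cdot E(\mathrm{OPT})$; summing yields the second entry of the $\max$, while the first entry $4$ dominates only in the small-$\alpha$ regime and will be established by a coarser sleep-state bound that does not exploit the speed-scaling factor.

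For the dynamic bound, let $w(t,d)$ (resp.\ $w^*(t,d)$) denote the remaining work of SqOA (resp.\ OPT) at time $t$ on jobs whose deadline is at most $d$. I would take a potential function
\[
\Phi(t) = c_\alpha \int_0^\infty \phi_\alpha\bigl(w(t,d),\, w^*(t,d)\bigr)\, dd,
\]
modelled on the one in the qOA paper, and tune $c_\alpha$ and $\phi_\alpha$ so that on any infinitesimal interval on which both processors are active the amortized inequality
\[
P_{\mathrm{dyn}}(\mathrm{SqOA}) + \frac{d\Phi}{dt} \;\le\; (2-1/\alpha)^\alpha\, 2^{\alpha-1}\, P_{\mathrm{dyn}}(\mathrm{OPT})
\]
holds. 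The two standard ingredients are that SqOA runs at $q = 2 - 1/\alpha$ times the OA speed and that $x \mapsto x^\alpha$ is convex; these, together with the treatment of job arrivals and completions from the qOA analysis, and the observation that $\Phi$ is non-negative and vanishes at times $0$ and $\infty$, give the dynamic bound upon integration.

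For the static and wake-up bound, I would adapt the charging scheme of SOA. Partition the time line into maximal active intervals of SqOA. In an interval where OPT is also active, the static cost $g\cdot|\text{interval}|$ is charged directly against OPT's own static energy. In an interval where OPT sleeps, SqOA's rule for staying awake (only while the accumulated idle $g$-cost is below the wake-up threshold $L$) lets one bound the static and wake-up cost together by the dynamic energy OPT must pay on the jobs SqOA is processing there. Summing over all intervals yields $E_{\mathrm{stat}} + E_{\mathrm{wake}} \le 2\cdot E(\mathrm{OPT})$, contributing the additive $2$ in the final ratio.

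The hard part will be making the two amortized analyses coexist across sleep-to-active transitions. When SqOA wakes up, its speed and the set of relevant jobs both change discontinuously, which can produce jumps in $\Phi$; symmetric jumps occur at OPT's transitions. The delicate step is to show that each such jump is absorbed either by an $L$-worth of wake-up budget already accounted for in the sleep-state bound or by a non-negative static slack, so that the per-instant dynamic inequality is preserved with no double-counting. Reconciling the qOA potential with the SOA charging at these transitions is where the main technical work will lie.
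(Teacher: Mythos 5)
Your overall architecture (potential function for the speed-scaled part, plus a Han-et-al-style charge for the sleep-related costs) matches the paper's in spirit, but there are concrete gaps that would prevent the plan from closing. First, you only assert the amortized inequality ``on any infinitesimal interval on which both processors are active,'' and then integrate. That is not enough: to conclude $E_a + \Phi \le c\,E_o$ at the end you need the running condition at \emph{all} times, and the genuinely new cases in this model are exactly the ones you omit --- when OPT is asleep or idle ($s_o = 0$) while SqOA is working, the inequality must take the form $s_a^\alpha + g + d\Phi/dt \le 0$, i.e.\ the potential must pay for everything. Handling this requires knowing how large $s_a$ can be when $s_o = 0$, which in turn needs a structural fact about OPT (if OPT's forward density $s_o'$ reaches the critical speed $s^*$ then OPT cannot be idle); none of this is in your plan. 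Second, the qOA potential cannot be used unmodified. SqOA runs at the critical speed $s^*$ whenever the density $\rho$ falls below $s^*$, so its speed is then \emph{not} $q$ times OA's speed and the qOA cancellation breaks down; the paper fixes this by capping each density at $s^*$ in the potential ($\hat g_i = \max\{s^*, g_i\}$) and by using the identity $g = (\alpha-1)(s^*)^\alpha$ to let the static power term supply the missing slack. Your decomposition exiles the static term $g$ to a separate bucket, which removes precisely the term that makes those cases of the dynamic inequality go through; relatedly, proving $E_{\mathrm{stat}} + E_{\mathrm{wake}} \le 2\,E(\mathrm{OPT})$ with $E_{\mathrm{stat}}$ covering \emph{all} active time is a strictly stronger statement than what Han et al.\ prove (they bound only idle-plus-wake-up by the additive $2$, folding the working-time static cost into the multiplicative $c$), and you give no argument for it.

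Finally, the difficulty you flag --- jumps in $\Phi$ at sleep-to-active transitions --- is not where the work is. The potential is a continuous function of the unfinished work of the two algorithms, so waking up or going to sleep causes no discontinuity; the only discrete events that need care are job arrivals and the merging/splitting/creation of critical intervals, and these are handled exactly as in qOA. The genuinely needed transition fact is different: once SqOA wakes up it remains busy until the deadline of the job that triggered the wake-up. This is what licenses importing the $\max\{c+2,4\}$ reduction from Han et al., and it must be re-proved for SqOA because SqOA's speeds differ from SOA's. Your proposal does not identify or establish this property.
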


\section{Algorithm SqOA}
\label{mainsection}

Let us start with some definitions and notations that we would be using in this paper. A processor 
is stated to be in a {\it working state} if it is currently running with a strictly positive speed. The processor is in an {\it idle state} if the processor is active but running with a speed $0$. Finally, the 
processor is said to be in a {\it sleep state} if it's no longer active.

Throughout this paper, we denote the notations used in Bansal et al. \cite{bansal_chan_pruhs}. 
The current time is always denoted as $t_{0}$. For $t_{0} \le t' \le t'' $, 
$w_{a}(t',t'')$ denotes the total amount of unfinished work for SqOA at $t_{0}$ that has a deadline during 
$(t', t'']$. The quantity $w_{0}(t',t'')$ is defined similarly for the optimal algorithm OPT. 
The current speeds of SqOA and OPT are denoted by $s_{a}$ and $s_{o}$, respectively.
 
Let $d(t' ,t'')$ = max $\{0, w_{a}(t',t'') - w_{0}(t',t'')\}$ be the excess unfinished work that 
SqOA has relative to OPT among the already released jobs with deadlines in the range $(t',t'']$, and let 
$\rho$ denote the current highest density. A sequence of critical
times $t_{0} < t_{1} < t_{2} < .... < t_{h}$ is defined iteratively as follows: 
let $t_{1}$ be the latest time such 
that $ d(t_{0}, t_{1})/( t_{1} − t_{0})$
is maximized. Note that $t_{1}$ is no more than the latest deadline of any job released thus far. If $t_{i}$ is earlier than the latest deadline, then let $t_{i+1} > t_{i}$ be the latest time, not later than the latest deadline, that maximizes $d(t_{i}, t_{i+1})/(t_{i+1} - t_{i})$. We refer to the intervals 
$[t_{i}, t_{i+1}]$ as critical intervals. We define $g_{i}$ as 
$g_{i} = d(t_{i},t_{i+1}) / (t_{i+1} - t_{i}) $. Note that $g_{0} , g_{1} ,....,g_{h} $ is a
non-negative strictly decreasing sequence. 

Let $\rho$ be the current highest density, i.e., $\max_{t > t_{0}}  \frac {w_{a}(t_{0}, t )} {(t-t_{0})}$. 
The critical speed $s*$ is defined as the speed $s$ which minimizes $\frac{P(s)}{s}$. The algorithm SqOA
initially starts from a sleep state, and decides on the following actions based on the current state the processor is in.\\ \\
{\bf Case I: The processor is currently in the working state:} If $s* > \rho > 0 $, keep working on the job with the earliest deadline at speed $s*$. If $ \rho \ge s* $,  keep working on the job with the earliest deadline at speed q$\rho$.   \\ \\
{\bf Case II: The processor is currently in the idle state:} Let $ t' \le t $ be the last time in the working state. (Set $t' = 0$, if undefined). If $\rho \ge s* $ , switch to the working state; otherwise, if $(t-t') g = L$, then switch to the sleep state. \\  \\
{\bf Case III: The processor is currently in the sleep state:} If $\rho \ge s* $, switch to the working state.  \\ \\

Han et al. \cite{han} proved that if we bound the working energy of their algorithm SOA or any algorithm having same sleep and idle strategy as SOA in terms of working energy of OPT, then that would also bound the total energy (including the idle and wake-up energy) of online algorithm in terms of total energy of OPT. Their proof is based on the property that if SOA wakes up after sleep or idle state then it must be busy till the deadline of the job which the algorithm has started processing after idle/sleep state. Our algorithm SqOA has the same idle and sleep strategy as SOA and so we need to show that the property above also holds for SqOA. 

\begin{lemma}
\label{lemma1}
Algorithm SqOA is busy till the deadline of the job J which gets processed after idle/sleep interval of SqOA. 
\end{lemma}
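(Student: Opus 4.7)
The plan is to prove that the highest density $\rho$ remains strictly positive throughout the interval $[t_0, d_J)$, where $t_0$ is SqOA's wake-up time; by the working-state rules (Case~I), this immediately implies SqOA stays in the working state, i.e.\ is busy. Let $t^*$ be the maximiser in $\rho_0 = \max_{t > t_0} w_a(t_0, t)/(t - t_0)$. Since $d_J$ is the earliest deadline of any unfinished job at $t_0$, $w_a(t_0, t) = 0$ for $t < d_J$, so the argmax is attained at some deadline $\ge d_J$; hence $t^* \ge d_J$. Moreover, Cases~II and III guarantee $\rho_0 \ge s^*$, so $w_a(t_0, t^*) = \rho_0(t^* - t_0) \ge s^*(t^* - t_0)$.

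Under the temporary assumption that no further jobs arrive, I would track the density along SqOA's trajectory on $[t_0, t^*]$. In the initial phase where $\rho(\tau) \ge s^*$, SqOA runs at $q\rho(\tau)$ and EDF keeps it on critical-interval jobs; writing $w_a(\tau, t^*) = \rho(\tau)(t^* - \tau)$ and differentiating yields the ODE $\dot\rho/\rho = -(q-1)/(t^* - \tau)$, whose solution is
\[
\rho(\tau) = \rho_0 \left(\frac{t^* - \tau}{t^* - t_0}\right)^{q-1}.
\]
This decays continuously to $s^*$ at some $\tau_1 < t^*$. After $\tau_1$, SqOA runs at $s^*$; a direct bookkeeping shows the density stays exactly at $s^*$ and the residual work drains in further time $t^* - \tau_1$, so SqOA finishes the critical work precisely at $t^*$. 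Throughout $[t_0, t^*)$ the speed is at least $s^* > 0$ and $\rho(\tau) > 0$, so SqOA is continuously in the working state; since $t^* \ge d_J$, this proves the lemma whenever no new arrivals occur.

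To complete the argument I would extend to arbitrary arrivals in $(t_0, d_J]$: any such arrival only increases $w_a$ on some suffix, so it cannot decrease $\rho$, and the (possibly reshaped) critical endpoint is still $\ge d_J$ because the earliest unfinished deadline is still at most $d_J$. The main obstacle is controlling how the critical endpoint $t^*$ evolves across arrivals that may introduce earlier deadlines or shift the argmax; I would handle this via the invariant that, while $J$ remains unfinished, $w_a(\tau, d_J)$ is at least the residual work of $J$ and therefore $\rho(\tau) > 0$, and by re-applying the no-arrival ODE analysis piecewise between consecutive arrivals to conclude that SqOA cannot enter the idle or sleep state before $d_J$.
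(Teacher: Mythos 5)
Your proposal rests on the same core mechanism as the paper's proof: the speed rule ties $s_a$ to the current highest density (speed $q\rho$ while $\rho \ge s^*$, speed $s^*$ once $\rho$ has just dipped below $s^*$), so the workload whose density defines $\rho$ cannot be drained before the time at which that density is attained, which is at least $d_J$. Where you differ is in execution. The paper's argument is a two-case, two-sentence ``infinitesimal'' sketch (if $\rho=s^*$ at wake-up the speed immediately becomes $s^*$, only infinitesimally above $\rho$; if $\rho>s^*$ the speed $q\rho$ drives $\rho$ down to $s^*$ in finite time and the same conclusion follows); it never writes down the dynamics of $\rho$. You make the first phase quantitative by solving $\dot\rho/\rho=-(q-1)/(t^*-\tau)$ to get $\rho(\tau)=\rho_0\left((t^*-\tau)/(t^*-t_0)\right)^{q-1}$, which cleanly shows $\rho$ stays positive and reaches $s^*$ strictly before $t^*$, and then observes that running at $s^*$ from density $s^*$ drains the residual work only at $t^*\ge d_J$. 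This buys a checkable statement in place of the paper's hand-waving, at the cost of assuming the argmax $t^*$ does not move; you should add that under EDF the prefix $w_a(\tau,t')$ achieving the maximum can only migrate to later deadlines, so the same ODE bound applies to whichever prefix currently attains $\rho$.

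The one gap you flag --- arrivals during $(t_0,d_J]$ --- is real, but it is equally unaddressed in the paper's own proof, which silently analyses only the no-arrival trajectory. Note that your closing invariant is aimed slightly off target: the danger from an arrival is not that $\rho$ decreases (it cannot), but that a burst of density raises the speed to $q\rho$ and lets SqOA finish all work with deadlines near $d_J$ early; and the observation that $w_a(\tau,d_J)$ dominates $J$'s residual work, hence $\rho>0$, only protects you while $J$ is unfinished, whereas the lemma needs the processor busy until $d_J$ even if $J$ itself completes early. The piecewise re-application of your ODE between consecutive arrivals, applied to the workload with deadlines in $(\tau,t^*]$ for the current argmax $t^*\ge d_J$, is the right way to close this; doing so would make your proof strictly more complete than the one in the paper.
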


\begin{proof}
Note that the processor switches to the working state only when $\rho \geq s*$. 
If $\rho = s*$ at the time $t_w$ of the switching, then the processor is set a
speed $q \rho$ at $t_w$. After an infinitesimal amount of time, the processor is set a speed $s*$ which is only infinitesimally greater than $\rho$ at that point of time. Therefore, the job $J$ can't finish before it's deadline.  \newline
If $\rho > s*$ at the time $t_w$ of the switching, then the processor is set a speed $q \rho$ at $t_w$.
After a finite amount of time, $\rho$ becomes infinitesimally smaller than $s*$ and the processor is set a speed $s*$. As before, the job $J$ can't finish before it's deadline in this case as well.
 \qed
\end{proof}
Based on this observation and the work of Han et al. \cite{han}, we can bound the total energy of SqOA by bounding the working energy. 

\begin{lemma}
\label{lemma2}    
If the working energy of  SqOA is at most $c$ times that of OPT, then its total energy is at most 
$\max\{c+2, 4\} $ times that of OPT. 
\end{lemma}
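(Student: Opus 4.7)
The plan is to decompose SqOA's total energy into its working, idle, and wake-up components and to bound each separately against OPT's total energy. Let $E_w^A$, $E_i^A$, $E_s^A$ denote the working, idle, and wake-up energies of SqOA, and let $E^{OPT}$ denote the total energy of OPT. The hypothesis already gives $E_w^A \leq c \cdot E^{OPT}$, so the remaining task is to control $E_i^A + E_s^A$ and then combine.

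First I would exploit the transition rules stated in Section \ref{mainsection}. By Case II, SqOA moves from idle to sleep only after accumulating exactly $L$ units of idle energy, and by Case III the subsequent wake-up costs exactly $L$. Hence every maximal non-working gap between two working intervals incurs at most $2L$ of combined idle-plus-wake-up energy, or at most $L$ if the gap ends with a direct idle-to-working switch that does not pass through sleep. This bookkeeping is identical to that of SOA.

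Next I would invoke Lemma \ref{lemma1}: each wake-up of SqOA is immediately followed by a busy interval lasting until the deadline $d_J$ of the triggering job $J$. This is the property on which Han et al.\ base their charging scheme. Because OPT must itself finish $J$ and every other pending job with deadline at most $d_J$, OPT either expends comparable working energy during the forced busy window $[t_w, d_J]$ or incurs comparable idle and wake-up costs of its own. Following Han et al.'s amortization, the $L$ (or $2L$) non-working cost of each cycle is charged either to OPT's working energy on $J$ or to OPT's non-working overhead during the same window. Summing over all cycles yields
\[
E_i^A + E_s^A \;\leq\; 2\, E^{OPT},
\]
and combining with the hypothesis gives $E^A \leq (c+2)\, E^{OPT}$, which is the first term of the $\max$.

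The second term $4\,E^{OPT}$ accounts for the regime in which $c < 2$ and the non-working energy dominates; in that case Han et al.\ obtain a direct bound $E^A \leq 4 \, E^{OPT}$ by comparing the entire energy of SqOA against that of OPT via the same wake-up structure, without routing through the hypothesis. The main obstacle is ensuring that Han et al.'s entire charging argument transfers verbatim to SqOA. The only algorithmic difference between SOA and SqOA lies in the working-state speed ($q\rho$ versus $\rho$ when $\rho \geq s^*$), which affects only $E_w^A$ and leaves the idle/sleep bookkeeping untouched; Lemma \ref{lemma1} supplies exactly the \emph{busy-until-deadline} property that powers Han et al.'s amortization, so the non-working portion of their analysis applies without modification, and taking the maximum of the two cases yields the claimed bound $\max\{c+2, 4\}$.
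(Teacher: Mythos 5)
Your proposal follows the paper's treatment exactly: the paper gives no self-contained proof of Lemma \ref{lemma2}, but justifies it by noting that SqOA has the same idle/sleep strategy as SOA and satisfies the busy-until-deadline property of Lemma \ref{lemma1}, whereupon the charging argument of Han et al.\ \cite{han} is imported wholesale --- precisely the two facts your argument rests on. One small caveat: your intermediate claim $E_i^A + E_s^A \le 2\,E^{OPT}$ would already give the unconditional bound $c+2$ and render the maximum with $4$ redundant, which indicates that Han et al.'s actual accounting charges SqOA's non-working energy at different rates against OPT's working and non-working components rather than against OPT's total energy; since you, like the paper, defer those details to \cite{han}, this imprecision does not change the verdict.
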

We adopt the method of amortized local competitiveness to find the competitive ratio of the algorithm. We choose our potential function as 
 \[  \phi = \beta \sum_{i = 0}^ {h-1} \Hat{g_{i}}^{\alpha-1}  (w_{a}(t_{i},t_{i+1}) - w_{o}(t_{i},t_{i+1}))\] 
where $ \hat{g_{i}} = \max \{s*,g_{i}\}$. This function $\phi(t)$  is a function of the current time and it must satisfy the following properties.
\begin{itemize}
\item $\phi(t)$ is $0$ at the arrival of first job and the deadline of last job.
\item Its value does not changes abruptly by discrete events like arrival of job, job completion by Optimal or Online algorithm and instantaneous changes in critical intervals.
\item At any time $t$, the following holds: $E_{a}(t) + \phi(t) \le c E_{o}(t)$  where $c$ is the competitiveness ratio.  
\end{itemize}

The following two lemmas are slight modifications of the results proved for the qOA algorithm given by 
Bansal et al. \cite{bansal_chan_pruhs}.

\begin{lemma}
\label{lemma3}
Job arrival do not change $ \phi $ and also do not change the critical intervals. Also completion of work by OPT or SqOA do not change $\phi $.
\end{lemma}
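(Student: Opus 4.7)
The plan is to handle the two kinds of discrete events (arrival of a new job and completion of a job by either algorithm) separately, and to track how each affects the building blocks $w_a(t',t'')$, $w_o(t',t'')$, $d(t',t'')$, the critical times $t_i$, and the slopes $\hat g_i$ that together determine $\phi$.

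For the arrival of a new job $J_j$ with processing volume $w_j$ and deadline $d_j$ at the current time $t_0$, the key observation is that $J_j$ has not yet been executed by either algorithm, so both $w_a(t',t'')$ and $w_o(t',t'')$ increase by exactly $w_j$ for every pair satisfying $t' < d_j \le t''$, and are unchanged otherwise. Consequently $w_a(t',t'') - w_o(t',t'')$ is pointwise invariant, and so is $d(t',t'')$; this alone handles each summand of $\phi$ and each slope $g_i$ (hence $\hat g_i$), provided the critical times themselves do not move.

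To verify that the critical times are unaffected I would induct along their construction. Since $d(t_i, \cdot)$ is pointwise unchanged on its original domain, the only scenario that needs attention is when $d_j$ exceeds the previous latest deadline $D$, enlarging the admissible range of $t_{i+1}$ to $(t_i, d_j]$. For $t \in (D, d_j)$ no deadline lies in $(D, t]$, so $d(t_i, t) = d(t_i, D)$; and at $t = d_j$ the arrival of $J_j$ contributes $w_j$ to both $w_a$ and $w_o$, so again $d(t_i, d_j) = d(t_i, D)$. Hence on the extended range the ratio equals $d(t_i, D)/(t - t_i)$, which is strictly smaller than its value at $D$, so no new maximizer appears in the extension and the latest argmax coincides with the one found before the arrival. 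The induction then carries through, giving that $t_1, \ldots, t_h$ are all preserved, which together with the first paragraph forces $\phi$ to be preserved.

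For completion of a job by OPT or SqOA, the claim reduces to continuity: a job's unfinished-work contribution to the relevant $w_a(t',t'')$ or $w_o(t',t'')$ decreases smoothly as work is processed, and completion is merely the instant at which that contribution reaches zero, so neither quantity jumps and $\phi$ is continuous across the event. The main obstacle, I expect, is the critical-interval bookkeeping for job arrivals, and specifically making sure that the ``latest argmax'' tie-breaking rule does not manufacture spurious new critical times when $d_j > D$; the strict inequality in the extended-range argument above is precisely what rules this out.
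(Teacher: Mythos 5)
Your proof is correct and follows the same route as the paper's: a new arrival adds the same volume to $w_a$ and $w_o$, so $d(t',t'')$ is pointwise invariant, and job completion is just the continuous limit of processing, so $\phi$ cannot jump; you simply supply the critical-times bookkeeping that the paper's two-sentence proof leaves implicit. One caveat: when $d_j$ exceeds the previous latest deadline $D$, the paper's iterative definition (which continues until the latest deadline is reached) does append a new critical time $t_{h+1}=d_j$ --- the paper defers this to Lemma \ref{lemma4} under ``formation of a new critical time'' --- so your claim that no new critical time appears is not literally consistent with that definition, although $\phi$ is still unchanged because the appended summand $\hat{g_h}^{\alpha-1}\bigl(w_a(t_h,t_{h+1})-w_o(t_h,t_{h+1})\bigr)$ equals zero, exactly as the paper observes.
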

\begin{proof}
Upon a job arrival, the work of both the online and offline algorithms increase exactly by the same amount,
and hence the excess work $d(t', t'')$ does not change for any $t'$ and $t''$. Also, $\phi$ is a continuous function of the unfinished works of SqOA and OPT and the unfinished works continuously decrease to 0. 
\qed
\end{proof}
Note that the critical times may change due to SqOA or OPT working on the jobs. So, we must ensure that the instantaneous changes in a critical interval cannot cause an abrupt change in $\phi$. 
\begin{lemma}
\label{lemma4}
Instantaneous changes in critical intervals do not abruptly change $\phi $.
\end{lemma}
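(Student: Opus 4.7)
The plan is to enumerate all the ways the partition $t_0 < t_1 < \cdots < t_h$ can change discontinuously and verify that $\phi$ is continuous across each. Between discrete events (already handled in Lemma~\ref{lemma3}), the works $w_a(t',t'')$ and $w_o(t',t'')$ evolve continuously in time, so any jump in $\phi$ must come from a structural change in the partition itself. The two fundamental structural events are (i) two adjacent critical intervals $[t_i,t_{i+1}]$ and $[t_{i+1},t_{i+2}]$ merging into $[t_i,t_{i+2}]$, and (ii) its time-reverse, a single critical interval splitting at a new critical time. Motions of an individual $t_i$ that do not change $h$ move $\phi$ continuously and require no separate argument.

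The central identity is the merger check. Since $g_0 > g_1 > \cdots > g_{h-1}$ is strictly decreasing throughout, a merger can only occur at the instant when $g_i$ and $g_{i+1}$ coalesce to a common value $g > 0$: indeed $d(t_i,t_{i+2})/(t_{i+2}-t_i)$ is a weighted average of $g_i$ and $g_{i+1}$, so $t_{i+1}$ loses its status as the latest maximizer precisely when $g_i = g_{i+1}$. At that instant $\hat{g_i} = \hat{g}_{i+1} = \max\{s^*,g\}$, and since $g>0$ one has $d = w_a - w_o$ on both sub-intervals, so
\[ \hat{g_i}^{\alpha-1}\bigl(w_a(t_i,t_{i+1}) - w_o(t_i,t_{i+1})\bigr) + \hat{g}_{i+1}^{\alpha-1}\bigl(w_a(t_{i+1},t_{i+2}) - w_o(t_{i+1},t_{i+2})\bigr) = \max\{s^*,g\}^{\alpha-1}\bigl(w_a(t_i,t_{i+2}) - w_o(t_i,t_{i+2})\bigr), \]
which is exactly the single summand contributed by the merged interval, whose new slope is also $g$. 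A split is the exact time-reverse and is handled by reading this identity from right to left.

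The main obstacle I anticipate is the boundary behaviour at the rightmost critical time $t_h$: when the trailing excess drops to zero and $h$ itself decreases, the vanishing summand uses the raw difference $w_a - w_o$ rather than $d = \max\{0, w_a - w_o\}$. Showing that this summand tends to zero as the trailing interval is retired requires using the maximality property that $t_h$ was the \emph{latest} time attaining the trailing slope, together with the fact that $g_{h-1}\to 0$ forces $w_a(t_{h-1},t_h)\to w_o(t_{h-1},t_h)$ at the retirement instant. Once this boundary case is reconciled, the merger/split identity above shows that every structural change in the partition leaves $\phi$ invariant, completing the proof.
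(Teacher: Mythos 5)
Your route is essentially the paper's: the paper also proves this lemma by enumerating the structural changes to the partition and checking, for a merger, that it can only occur at the instant $g_i = g_{i+1}$ so that $\hat{g_i} = \hat{g}_{i+1}$ and the two summands collapse into the merged one, and for a split, that both newly formed intervals inherit the density $g_k$ of the parent. Your explicit weighted-average identity is a more careful rendering of the paper's one-line justification. Two differences are worth flagging. First, you omit a third structural event that the paper treats separately: the creation of a new critical time $t_{h+1}$ when a job arrives whose deadline exceeds all earlier deadlines. You defer this to the discrete-event analysis of Lemma~\ref{lemma3}, but that lemma only asserts that arrivals leave the existing excesses unchanged; the new trailing summand still needs the (one-line) observation that $w_{a}(t_{h},t_{h+1}) - w_{o}(t_{h},t_{h+1}) = 0$ at the moment of creation, so $\phi$ is unaffected. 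Second, the boundary issue you raise --- retirement of the trailing interval when its excess density reaches zero, where the summand involves the raw difference $w_a - w_o$ rather than $d = \max\{0, w_a - w_o\}$ --- is a genuine subtlety that the paper's proof does not address at all; your sketch of how to resolve it (maximality of $t_h$ forcing the difference to vanish) is plausible but is left as a declared intention rather than an argument, so that step should be completed before the proof is considered closed.
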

\begin{proof}
There are three ways the critical times can change.

\begin{itemize}
\item {\it Merging of two critical intervals:} As SqOA follows an EDF (earliest deadline first) policy,
it must work on jobs with deadlines in $[t_{0}, t_{1}]$, causing $g_{0}$ to decrease until it becomes equal to $g_{1}$. At this point, the critical intervals $[t_{0}, t_{1}]$ and $[t_{1}, t_{2}]$ merge together. The
potential function $\phi$ does not change by this merger as $g_{0} = g_{1} $ and thus 
$\hat g_{0} = \hat g_{1}$ at this point.  \\

\item {\it Splitting of a critical interval:} As OPT works on some job with deadline 
$t_{0} \in (t_{k}, t_{k+1}] $, the quantity 
\begin{displaymath}
\frac {w_{a}(t_{k} , t') - w_{o} (t_{k} , t')} {t'-t_{k}} 
\end{displaymath}  
may increase faster than
\begin{displaymath}
\frac {w_{a}(t_{k} , t_{k+1}) - w_{o} (t_{k} , t_{k+1}) } {t_{k+1}-t_{k}},
\end{displaymath}
causing this interval to split into two critical intervals, $[t_{k}, t']$ and $ [t', t_{k+1}]$. This split does not change $\phi$ as the density of the excess work for both of these newly formed intervals is $g_{k}$ and hence $\hat g_{k}$ is also the same. \\

\item {\it Formation of a new critical time:} This happens when a job arrives with a deadline later
than any of the previous jobs. The creation of the new critical time $t_{h+1}$ doesn't change the
potential $\phi$ because the excess unfinished work  $ (w_{a}(t_{h},t_{h+1}) - w_{o}(t_{h},t_{h+1}))$ 
is equal to $0$. \qed
\end{itemize}

\end{proof}
The observations above imply that the potential function does not change due to any discrete events
such as arrivals, job completions, or instantaneous changes in critical intervals. Then, in order to establish that SqOA is $c$-competitive with respect to energy, it is sufficient to show the following  condition at all times: 
\begin{displaymath}
\frac {dE_{a}(t)} {dt}  + \frac{d\phi}{dt} \le c \frac {dE_{o}(t)} {dt}.
\end{displaymath}  
To show this, we need a few more lemmas. Let $s_{o}' = \max_{t >t_{0}} \frac {w_{o}(t_{0},t)} {t-t_{0}}$. Then, the following relationship holds between $s_{o}'$, $s_o$ and $s*$.
\begin{lemma}
\label{lemma5}
If $ s_{o} > 0 $  then $ s_{o} \ge  s_{o}' $. Also if $s_{o}' \ge s*$ then $s_{o} \neq 0 $.
\end{lemma}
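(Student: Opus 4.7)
The plan is to prove both statements by exchange arguments exploiting the strict convexity of $P(s) = s^\alpha + g$ and the optimality of OPT. In each case let $t^*$ attain the maximum defining $s_o'$ and write $T = t^* - t_0$ and $W = s_o'\,T$, so that OPT must process at least $W$ units of work during $[t_0, t^*]$ in order to meet all deadlines with release time at most $t_0$.

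For the first claim, assume for contradiction that $s_o > 0$ but $s_o < s_o'$. The average speed of OPT over $[t_0, t^*]$ is at least $s_o'$, so combined with $s_o < s_o'$ at $t_0$ there must exist a later sub-interval of $(t_0, t^*]$ on which OPT runs at a speed strictly exceeding $s_o'$. Transferring an infinitesimal amount of processing from that later sub-interval to $t_0$ only front-loads work relative to every remaining deadline, so feasibility is preserved, and the strict convexity of $s \mapsto s^\alpha$ makes the total energy strictly decrease, contradicting the optimality of OPT.

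For the second claim, assume $s_o = 0$ and $s_o' \geq s^*$. Then OPT is idle or asleep on an initial sub-interval $[t_0, t_1]$ of positive length $\delta$, and, after becoming active at $t_1$, may by convexity be taken to run at constant speed $s_1 = W/(T-\delta)$ on $[t_1, t^*]$. Compare OPT with the alternative schedule $\sigma$ that becomes active immediately at $t_0$, runs at constant speed $s_o'$ on $[t_0, t^*]$, and matches OPT thereafter; any wake-up penalty $L$ is paid exactly once by each schedule, so the wake-up terms cancel. In the idle sub-case, the overhead $gT$ is the same for both schedules and Jensen's inequality applied to $s \mapsto s^\alpha$ gives $\sigma$ strictly smaller working energy. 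In the sleep sub-case, a direct computation yields
\[ E(\sigma) - E(\mathrm{OPT}) \;=\; (s_o')^\alpha T - s_1^\alpha (T - \delta) + g\,\delta, \]
and substituting the identity $(s^*)^\alpha = g/(\alpha-1)$ together with $s_o' \geq s^*$ reduces the non-positivity of the right-hand side to the Bernoulli-type inequality $1 + (\alpha-1)x \leq (1-x)^{-(\alpha-1)}$ for $x = \delta/T \in (0,1)$, which follows from convexity. For $\delta > 0$ the inequality is strict, contradicting the optimality of OPT and forcing $s_o \neq 0$.

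The principal obstacle is the sleep sub-case, where one must simultaneously track the wake-up energy $L$, the idle power $g$, and the working power $s^\alpha + g$, and then invoke the precise identity $g = (\alpha-1)(s^*)^\alpha$ defining the critical speed to extract the key inequality; without the hypothesis $s_o' \geq s^*$ the modification to $\sigma$ need not be beneficial. The first-claim exchange is the standard YDS-style smoothing, and the only subtlety there is checking that the infinitesimal swap preserves feasibility across all overlapping deadlines inside $(t_0, t^*]$, which is immediate because the swap strictly front-loads work.
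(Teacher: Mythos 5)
Your proof of the first claim is essentially the paper's: both are the same infinitesimal exchange argument using the strict convexity of $P$, differing only in that you phrase the high-speed portion as a sub-interval where OPT exceeds $s_{o}'$ while the paper speaks of a job scheduled at speed greater than $s_{o}'$. For the second claim, however, you take a genuinely different route. The paper stays local: it observes that some work contributing to $w_{o}(t_0,t')$ must run at a speed exceeding $s_{o}'\ge s^*$, and since $P(s)/s$ is strictly increasing above the critical speed $s^*$ (by definition of $s^*$ as the minimizer of $P(s)/s$), slowing that work down and placing it at the current idle moment strictly reduces the energy per unit of work, contradicting optimality. You instead make a global comparison between OPT and a constant-speed schedule on all of $[t_0,t^*]$, split into idle and sleep sub-cases, and reduce the sleep sub-case to the Bernoulli-type inequality $(1-x)^{-(\alpha-1)}\ge 1+(\alpha-1)x$ via the explicit identity $g=(\alpha-1)(s^*)^{\alpha}$. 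Your computation is correct (and makes the role of the critical-speed identity explicit, which the paper only uses implicitly), but it buys this at the cost of a global restructuring: the claim that $\sigma$ can ``match OPT thereafter'' needs a word of justification if OPT processes work with deadlines beyond $t^*$ during $[t_1,t^*]$, and the reduction of OPT to a constant active speed should be stated as a Jensen lower bound on OPT's energy rather than a without-loss-of-generality replacement (constant speed need not be feasible against the interior deadlines). The paper's local perturbation sidesteps both issues, which is why it is shorter and somewhat more robust; your version is more self-contained and quantitative.
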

\begin{proof}
If $ s_{o} < s_{o}'$, then one of the jobs J contributing to $w_{o}(t_{o},t')$ must be scheduled at a speed greater than $s_{o}' $. Also  $s_{o} > 0 $ implies that some job (let us call it $J'$) is scheduled at
the current time. As the power function  $P(s) = s^\alpha + g $ is strictly convex, we can reduce the energy consumption of OPT by increasing the speed of $J'$, decreasing the speed of $J$ and scheduling it an interval corresponding to $J$. This contradicts the optimality of OPT. 

For the second part of the Lemma, let $s_{o}' \ge s* $. If $s_{o} = 0$, one of the jobs J (among those contributing to $w_{o}(t_{o},t')$) must be scheduled at a speed greater than $s_{o}'$ and 
hence with a speed $s*$. Note that $s*$ is the speed at which processing energy of a job is minimized and therefore the following relationship holds: $P(s1)/s1 > P(s2)/s2 > P(s*)/s* $ for $s1>s2>s*$. 
(The inequalities are strict because here P is strictly convex function). By decreasing the speed of J and scheduling it at current time $t_{o}$, we can further decrease energy consumption. This contradicts the optimality of OPT. \qed
\end{proof}
\begin{lemma}
\label{lemma6}
If $\rho \ge s*$, then $s_{a} \ge qg_{0}$.
\end{lemma}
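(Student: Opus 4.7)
My plan is to derive the inequality $s_a \ge q g_0$ by simply unwinding the definitions of the quantities involved and reading off the speed prescribed by the algorithm in the regime $\rho \ge s*$. There is essentially no estimation to do; the statement will fall out of a short chain of inequalities once the right quantities are identified.

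First, I would invoke the algorithm's specification. In Case I (working state) with $\rho \ge s*$, SqOA runs at $q\rho$; in Cases II and III the processor switches to the working state whenever $\rho \ge s*$, and by the rule of Case I begins running at speed $q\rho$. Hence under the hypothesis $\rho \ge s*$ one has $s_a = q\rho$ at the current time $t_0$.

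Next I would compare $\rho$ with $g_0$ directly. By definition, $g_0 = d(t_0,t_1)/(t_1-t_0)$ where $d(t_0,t_1) = \max\{0, w_a(t_0,t_1) - w_o(t_0,t_1)\} \le w_a(t_0,t_1)$, since $w_o(t_0,t_1) \ge 0$. Therefore
\[
 g_0 \;\le\; \frac{w_a(t_0,t_1)}{t_1-t_0} \;\le\; \max_{t>t_0}\frac{w_a(t_0,t)}{t-t_0} \;=\; \rho,
\]
where the second inequality is just the definition of $\rho$ as a maximum over $t>t_0$. Multiplying through by $q\ge 1$ gives $qg_0 \le q\rho = s_a$, as required.

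The only subtlety worth flagging is the case $\rho = s*$ exactly at the switching moment, since the algorithm description distinguishes $\rho \ge s*$ from $s* > \rho$; here I would note that at the instant of switching the speed is indeed $q\rho$ as prescribed by Case I, and Lemma~\ref{lemma1} already guarantees that this speed is well-defined (the processor is treated as working). Beyond this boundary remark, no further obstacle is expected: the lemma is essentially a tautology given the algorithm's rule $s_a = q\rho$ and the trivial bound $g_0 \le \rho$.
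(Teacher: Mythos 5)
Your argument is correct and is essentially identical to the paper's own proof: both rest on the algorithm's rule $s_a = q\rho$ when $\rho \ge s*$, the bound $d(t_0,t_1) \le w_a(t_0,t_1)$, and the fact that $\rho$ is the maximum of $w_a(t_0,t)/(t-t_0)$ over $t > t_0$. You merely write the same chain of inequalities starting from $g_0$ rather than from $s_a$, so there is nothing substantive to distinguish the two.
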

\begin{proof}
By the definition of SqOA, $ \rho \ge s*$ implies that
\begin{displaymath}
s_{a} = q.max_{t>t_{0}} \frac {w_{a}(t_{0} , t)} {t-t_{0}} \ge q. \frac {w_{a}(t_{0},t_{1})} {t_{1} - t_{0}} \ge q. \frac { d(t_{0} ,t_{1})} {t_{1}-t_{0}} = qg_{0}. 
\end{displaymath} 
\end{proof}
\begin{lemma}
\label{lemma7}
Suppose $\rho \ge s* $. If  $s_{o} = 0$, then $ s_{a} \le qg_{0} +q s* $. If $s_{o} > 0$, 
then $s_{a} \le qg_{0} + qs_{o}$. 
\end{lemma}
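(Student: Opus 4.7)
The plan is to unfold $s_a = q\rho$, which applies because $\rho \geq s^{*}$, and split the ratio defining $\rho$ into an ``excess'' part controlled by $g_0$ and an OPT-density part controlled via Lemma~\ref{lemma5}.

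First I would pick a time $t^{\star} > t_0$ attaining the maximum in the definition of $\rho$, so that $\rho = w_a(t_0, t^{\star})/(t^{\star} - t_0)$. Writing $w_a(t_0, t^{\star}) \le d(t_0, t^{\star}) + w_o(t_0, t^{\star})$ (which follows from $d = \max\{0, w_a - w_o\}$) and dividing by $t^{\star} - t_0$ would give
\[
\rho \;\le\; \frac{d(t_0, t^{\star})}{t^{\star} - t_0} + \frac{w_o(t_0, t^{\star})}{t^{\star} - t_0} \;\le\; g_0 + s_o',
\]
where the first bound uses that $t_1$ was defined as the latest maximizer of $d(t_0, \cdot)/(\cdot - t_0)$, so no other $t^{\star}$ can beat $g_0$, and the second bound is just the definition of $s_o'$. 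Multiplying by $q$ yields the uniform inequality $s_a \le q g_0 + q s_o'$.

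To finish, I would apply Lemma~\ref{lemma5} in each case. If $s_o > 0$, the lemma directly gives $s_o' \le s_o$, so $s_a \le q g_0 + q s_o$. If $s_o = 0$, then the contrapositive of the second half of Lemma~\ref{lemma5} forces $s_o' < s^{*}$, so $s_a \le q g_0 + q s^{*}$.

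The argument is mostly bookkeeping; the only mild subtlety is whether the supremum defining $\rho$ is attained, so that picking a concrete $t^{\star}$ is legitimate. Since $w_a(t_0, \cdot)$ is piecewise constant, jumping only at the deadlines of currently unfinished SqOA-jobs, the ratio $w_a(t_0, t)/(t - t_0)$ is maximized at one of those deadlines, and a valid $t^{\star}$ exists. Beyond this, I expect no real obstacle: the lemma is essentially a clean restatement of the algorithm's definition combined with the already-proved Lemma~\ref{lemma5}.
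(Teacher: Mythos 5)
Your proposal is correct and matches the paper's own argument: both bound $s_a = q\rho$ by splitting $w_a \le w_o + d$ to get $\rho \le s_o' + g_0$, then invoke Lemma~\ref{lemma5} to replace $s_o'$ by $s_o$ or $s^*$ in the two cases. The only cosmetic difference is that you fix a maximizer $t^{\star}$ while the paper bounds the max of a sum by the sum of the maxes; the content is identical.
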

\begin{proof}
By the definition of SqOA, $ \rho \ge s* $ implies that
\begin{equation*} s_{a} = q.max_{t>t_{0}} \frac {w_{a}(t_{0} , t)} {t-t_{0}}  
 \le q. max_{t > t_{0}} \frac {w_{o}(t_{0} , t) + d(t_{0},t)} {t-t_{0}}  
\end{equation*}
\begin{equation*}
 \le q. max_{t>t_{0}} \frac {w_{o}(t_{o} , t)} {t -t_{0}} + q .max_{t > t_{0}} \frac{d(t_{0} , t)} {t-t_{0}}  \\ 
\le q.s_{o}' + q .g_{0}.  \end{equation*}
From Lemma \ref{lemma5}, $s_{o} = 0$ implies that $s_{o}' < s*$. Therefore, $s_{a} \le qg_{o} +q s*$ in this case.  On the other hand, $ s_{o} > 0$ implies that $s_{o}' \le s_{o}$. Therefore, $s_{a} \le qg_{o} + qs_{o}$
in this case. 
\qed
\end{proof} 
We would now move toward proving that the equation $ \frac {dE_{a}(t)} {dt}  + 
\frac{d\phi}{dt} \le c \frac {dE_{o}(t)} {dt}$ is true at all times $t$ except at the time of arrival for a job. However, Lemma \ref{lemma3} guarantees that $E_{a}(t) + \phi(t) \le c E_{o}(t) $ is true at all time (including the arrival time of any job). Since $\phi $ vanishes at the end, the total working energy of SqOA would
be at most $c$ times that of OPT.
Let us start considering the different cases depending on the values of $\rho,g_{i},s_{o}$ and $s_{a}$ at the current time. For each of these cases, we would get an inequality in terms of q, $\beta$ and c. However, we don't know how to choose the values of q, $\beta$ and c which would satisfy all the equations and minimize c. Instead, we will take q = 2 - 1/$ {\alpha} $ , $\beta$ = c = $q^\alpha  2^{\alpha - 1} $ and show that all the inequalities are satisfied. Thus, the value of $c$ in the competitive ratio of our algorithm would be 
$c = q^\alpha  2^{\alpha - 1}$.  

Without any loss of generality, we can assume that both OPT and SqOA schedule jobs according to the
Earliest Deadline First policy, and therefore SqOA is working on a job with a deadline at 
most $t_{1}$. Let $t'$ be the
deadline of the job that OPT is working on, and let $k$ be such that $t_{k} < t' \le t_{k+1}$. \\

\noindent \textbf {Case 1: $g_{0} > s*  , k = 0 , s_{a} > 0 , s_{o} > 0$} \\ \\
Since $g_{0} > s*$, there exists an interval $[t_{0},t_{0}+dt_{0}]$ such that
$\hat{g_{0}} = \max \{s*,g_{0}\} = g_{0}$ at any point of time in $[t_{0},t_{0}+dt_{0}]$. Therefore, \\  \\
$ \frac {d\phi} {dt_{0}}  = \beta \frac {d} {dt_{0}} (g_{0}^{\alpha-1} (w_{a}(t_{0},t_{1}) -w_{o}(t_{0},t_{1}))) $ \\
$ = \beta [g_{0}^{\alpha-1}(-s_{a}+s_{o}) + (\alpha-1)g_{0}^{\alpha-2}d(t_{0},t_{1}) \frac {d} {dt_{0}} (\frac {w_{a}(t_{0},t_{1}) - w_{o}(t_{0},t_{1})} {t_{1} -t_{0}} )] $ \\
$ = \beta [g_{0}^{\alpha-1}(-s_{a}+s_{o}) + (\alpha-1)g_{0}^{\alpha-2}d(t_{0},t_{1}) (\frac {(t_{1}-t_{0}) (-s_{a}+s_{0}) + d(t_{0},t_{1})} {(t_{1}-t_{0})^2})] $ \\
$ = \beta (\alpha g_{0} ^{\alpha - 1} (-s_{a} + s_{o}) + (\alpha - 1) g_{0}^\alpha )$. \\ \\
As $ \frac {dE_{a}(t)} {dt} = s_{a}^\alpha + g$  and $\frac {dE_{o}(t)} {dt} = s_{o}^\alpha + g$, we need to show that  
 \[ s_{a}^\alpha + g + \beta (\alpha g_{0} ^{\alpha - 1} (-s_{a} + s_{o}) + (\alpha - 1) g_{0}^\alpha )  \le c ( s_{o}^\alpha + g)  \qquad \qquad (1) \]

\noindent Equation (1) can be written as \\
$s_{a}^\alpha - \beta \alpha g_{0} ^{\alpha - 1} s_{a} + \beta \alpha g_{0} ^{\alpha - 1} s_{o} + \beta (\alpha - 1) g_{0}^\alpha - c s_{o}^\alpha \le g (c-1)$, 
which is implied if \\ \\
$ s_{a}^\alpha - \beta \alpha g_{0} ^{\alpha - 1} s_{a} + \beta \alpha g_{0} ^{\alpha - 1} s_{o} + \beta (\alpha - 1) g_{0}^\alpha - c s_{o}^\alpha \le 0. \qquad \qquad \qquad \qquad (1a)$ \\ \\
\noindent It can be easily seen that $\rho \ge g_{0} > s*$. It follows from Lemma \ref{lemma7} that 
$qg_{0} \le s_{a} \le qg_{0} + qs_{o} $.  Note that the left hand side of equation (1a) is a convex function in $s_{a}$, and therefore it only remains to prove the inequality for $ s_{a} = qg_{0} $ and $ s_{a} = q (g_{0} + s_{o}).$ \\ \\
Substituting $ s_{a} = qg_{0} $ in the left hand side of (1a), we get
\begin{equation*}
q^\alpha g_{0}^\alpha - \beta q \alpha g_{0}^\alpha + \beta \alpha g_{0}^{\alpha - 1} s_{o} + \beta (\alpha - 1) g_{0}^\alpha - c s_{0}^\alpha = 
(q^\alpha - \beta \alpha q + \beta (\alpha - 1))g_{0}^\alpha + \beta \alpha g_{0}^{\alpha - 1} s_{o} - c s_{o}^\alpha.
\end{equation*}
Taking derivative with respect to $s_{o}$, we get that this is maximized at $s_{o}$ satisfying $ c s_{o}^{\alpha-1} = \beta g_{0}^{\alpha -1}$ and hence $ s_{o} = g_{0}$. Substituting this for $s_{o}$ and
cancelling $ g_{0}^\alpha $, it follows that we need to satisfy the following equation: 
$(q^\alpha - \beta \alpha q + \beta (\alpha - 1)) + \beta (\alpha - 1) \le 0$. This is easily satisfied as
$q^\alpha  < \beta $. \\ \\
Substituting $s_{a} = q g_{0} + q s_{o} $, the inequality 1(a) becomes \\ \\
$ q^\alpha (g_{0} + s_{o})^\alpha - \beta \alpha q (g_{0} + s_{o}) g_{0}^{\alpha -1} + \beta \alpha g_{0}^{\alpha -1} s_{o} +\beta (\alpha-1) g_{0}^\alpha - c s_{o}^\alpha  \le 0                             \qquad \qquad \qquad\\
\Leftrightarrow q^\alpha (g_{0}+s_{o})^\alpha - \beta(q\alpha - (\alpha-1))g_{0}^\alpha - \beta \alpha(q-1) g_{0}^{\alpha-1} s_{o} -cs_{o}^\alpha \le 0. \qquad \qquad \qquad \qquad (1b)$ \\
Let $s_{o} = x · g_{0}$. It follows from inequality 1(b) that we need to satisfy 
$ q^\alpha (1+x)^\alpha - \beta(q\alpha - (\alpha-1)) - \beta \alpha (q-1) x - cx^\alpha \le 0$. 
Since $\beta = c = q^{\alpha} 2^{\alpha - 1}$, this inequality would be satisfied if
$(1+x)^\alpha - \alpha 2^{\alpha-1} -2^{\alpha-1} x^\alpha \le 0$. 
The maximum value of the left hand side is attained at $x=1$ where the value of its derivative $ \alpha (1+x)^{\alpha-1} - 2^{\alpha-1} \alpha x^{\alpha - 1}$ equals 0. It can be easily seen that this maximum value is 
$(2^\alpha - \alpha 2^{\alpha-1} - 2^{\alpha-1})$ = $ 2^{\alpha-1} (1- \alpha ) < 0$.
 \\ 

\noindent \textbf {Case 2:  $g_{0} > s* , k>0,  g_{k} > s* , s_{a} > 0 ,s_{o} > 0$} \\ \\
\noindent Since $g_{0}, g_k > s*$, there exists an interval $[t_{0},t_{0}+dt_{0}]$ such that
$\hat{g_{0}} = \max \{s*,g_{0}\} = g_{0}$ and $\hat{g_{k}} = \max \{s*,g_{k}\} = g_{k}$ 
at any point of time in $[t_{0},t_{0}+dt_{0}]$. Therefore, \\  \\
$ \frac {d\phi} {dt_{0}}  = \beta [\frac {d} {dt_{0}} (g_{0}^{\alpha-1} (w_{a}(t_{0},t_{1}) -w_{o}(t_{0},t_{1}))) + \frac {d} {dt_{0}} (g_{k}^{\alpha-1} (w_{a}(t_{k},t_{k+1}) -w_{o}(t_{k},t_{k+1})))] $ \\ 
$=\beta [\alpha g_{0} ^{\alpha - 1} (-s_{a}) + (\alpha - 1) g_{0}^\alpha + \alpha g_{k}^{\alpha-1}(s_{o})]$. \\

\noindent Note that $g_{k} < g_{0}$, as we have observed that $g_0, g_1, \ldots$ is a strictly decreasing sequence. Therefore, the quantity $\frac {d\phi} {dt_{0}}$ above is less than or equal to 
$\beta (\alpha g_{0} ^{\alpha - 1} (-s_{a} + s_{o}) + (\alpha - 1) g_{0}^\alpha )$. Moreover, it can be easily seen that $\frac {dE_{a}(t)} {dt} = s_{a}^\alpha + g$  and $\frac {dE_{o}(t)} {dt} = s_{o}^\alpha + g$. Therefore, we need to prove exactly the same inequality (1) in case 1. \\

\noindent \textbf {Case 3:  $g_{0} > s* , k>0 $ , $g_{k} < s* , s_{a} > 0 ,s_{o} > 0 $} \\ \\
\noindent Since $g_{0}, g_k > s*$, there exists an interval $[t_{0},t_{0}+dt_{0}]$ such that
$\hat{g_{0}} = \max \{s*,g_{0}\} = g_{0}$ and $\hat{g_{k}} = \max \{s*,g_{k}\} = s*$ 
at any point of time in $[t_{0},t_{0}+dt_{0}]$. Therefore, \\  \\
$ \frac {d\phi} {dt_{0}}  = \beta [\frac {d} {dt_{0}} (g_{0}^{\alpha-1} (w_{a}(t_{0},t_{1}) -w_{o}(t_{0},t_{1}))) + \frac {d} {dt_{0}} ((s*)^{\alpha-1} (w_{a}(t_{k},t_{k+1}) -w_{o}(t_{k},t_{k+1})))]$
$=\beta [\alpha g_{0} ^{\alpha - 1} (-s_{a}) + (\alpha - 1) g_{0}^\alpha + (s*)^{\alpha-1}(s_{o})]$. \\

\noindent We need to show that
$  s_{a}^\alpha + g  +\beta [\alpha g_{0} ^{\alpha - 1} (-s_{a})$ + $(\alpha - 1) g_{0}^\alpha + (s*)^{\alpha-1}(s_{o})] \le c s_{o}^\alpha + cg$. As we have already proven the inequality (1), it suffices to prove 
that $\beta (s*)^{\alpha-1}s_{o} \le \beta \alpha g_{0}^{\alpha-1} s_{o} \Leftrightarrow g_{0} \ge \frac {s*} {\alpha^{\frac {1}{\alpha-1}}}$. This is true since $g_{0} > s* $. \\

\noindent \textbf {Case 4:  $g_{0} > s* , k>0 $ , $g_{k} = s* , s_{a} > 0 ,s_{o} > 0 $} \\ \\
\noindent Since $g_k = s*$, there exists an interval $[t_{0},t_{0}+dt_{0}]$ such that
either $\hat{g_{k}} = \max \{s*,g_{k}\} = g_{k}$, or $\hat{g_{k}} = \max \{s*,g_{k}\} = s*$
at any point of time in $[t_{0},t_{0}+dt_{0}]$. In the former case, the proof corresponds 
to case 2 because $\frac {d\phi} {dt}, \frac {dE_{a}(t)} {dt}, \frac {dE_{o}(t)} {dt}$
are exactly the same. In the later case, the proof corresponds to Case 3 for the same reason. \\

\noindent \textbf {Case 5: $g_{0} < s* , s_{a} > 0 ,s_{o} > 0 $} \\ \\
As we have observed that $g_0, g_1, \ldots$ is a strictly decreasing sequence, $g_{0} < s* \Rightarrow g_{k} < s* $. Therefore, $\frac{d\phi}{dt_{0}}$ remains the same for $k = 0$ and $k > 0$. Hence, we need
not consider them as different cases.\\ \\
As $ \frac {d\phi} {dt_{0}}  = \beta (s*)^{\alpha-1} (-s_{a}+s_{o}) $, we need to prove that \\ \\ 
$s_{a}^\alpha + g + \beta (s*)^{\alpha-1} (-s_{a}+s_{o}) \le c ( s_{o}^\alpha + g).$ \quad \quad \quad 
\quad \quad \quad \quad \quad \quad \quad \quad \quad \quad \quad (2)\\ \\
\noindent Consider the case  $\rho < s* $. Then, the algorithm SqOA runs at a speed $s_{a} = s*$.  So the inequality (2) becomes 
 $  (s*)^\alpha + g + \beta (s*)^{\alpha-1} (- s* + s_{o}) \le c ( s_{o}^\alpha + g). $ \\ \\
Let $s_{o} = xs* $ . Substituting this value of $s_{o}$, we obtain \\ \\
$(s*)^{\alpha} + g -\beta(s*)^{\alpha} + \beta(s*)^{\alpha} x \le c x^{\alpha} (s*)^{\alpha} + cg $ \\ 
$ \Leftrightarrow1+ (\alpha-1) -\beta  +\beta x \le cx^\alpha +  c (\alpha-1) $ \\
$ \Leftrightarrow \alpha +\beta x \le cx^{\alpha} + c\alpha $ \\
$ \Leftrightarrow cx^{\alpha}+(c-1)\alpha-cx \ge 0 $ \\ \\
Differentiating the left hand side of the above inequality w.r.t. $x$ and equating it to $0$, 
we obtain $x =  (\frac {1}{\alpha})^{\frac{1}{\alpha-1}}.$ The value of the left hand side at this 
value of $x$ is equal to $-c \frac{1}{\alpha^{\frac{1}{\alpha-1}}} + c \alpha -\alpha $ 
$ >-c . 1 +c \alpha -\alpha  $ 
$ = (c-1)(\alpha-1) - 1 > 0 $. Since the second derivative of $cx^{\alpha}+(c-1)\alpha-cx$ is
greater than $0$ at any value of $x$ (this is true since $\alpha > 1$), it implies that
$cx^{\alpha}+(c-1)\alpha-cx$ attains it's minimum at $x =  (\frac {1}{\alpha})^{\frac{1}{\alpha-1}}.$ This completes the proof of inequality (2) in this case.\\ \\
Next, consider the case $ \rho \ge s* $. From Lemma \ref{lemma7}, it implies that
$s_{a} \le qg_{0}+qs_{o}$. Since $s_a > 0$, the algorithm SqOA is in a working state and it
implies that $s_a \geq s*$. Note that we have already shown that the inequality (2) is true for $s_{a} = s*$. As the function $s_{a}^\alpha + g + \beta (s*)^{\alpha-1} (-s_{a}+s_{o})$ is convex, it only remains to show that the inequality (2) holds for $s_{a} = qg_{0}+qs_{o}$ as well. \\ \\ 
Consider $s_{a} = qg_{0}+qs_{o}$. We need to prove that \\ \\
$ q^\alpha(g_{0}+s_{o})^\alpha + g + \beta(s*)^{\alpha-1} (-q(g_{0}+s_{o}) +s_{0}) \le c s_{o}^\alpha +cg $. 
\\ \\
Let $s_{o} = x g_{0}$ and $g_{0} = y s*$. (Note that $y \le 1$.) Substituting this value of $s_{o}$
and $g_0$, we obtain the following inequality which we need to prove. \\ \\
$ q^\alpha y^\alpha (1+x)^\alpha  - \beta q (1+x) y +\beta xy \le cx^\alpha y^\alpha + (c-1)(\alpha-1) $  \qquad \qquad \qquad (3)\\ \\
Let $ h(x) =q^\alpha y^\alpha(1+x)^\alpha - cx^\alpha y^\alpha - \beta qy $. Note that 
the inequality (3) can now be rewritten as $h(x)- \beta qxy + \beta xy \leq (c-1)(\alpha -1)$.
Since $(- \beta qxy + \beta xy)$ is negative (it follows from the fact that $q>1$), it suffices to
prove that $h(x) \leq (c-1)(\alpha -1)$. \\ \\

\noindent Differentiating this function $h(x)$, we obtain \\ \\
$h'(x) =q^\alpha y^\alpha \alpha (1+x)^{\alpha-1}-cy^\alpha \alpha x^{\alpha-1} \geq 0$ \\
$ \Leftrightarrow (1+x)^{\alpha-1} \geq 2^{\alpha-1} x^{\alpha-1} \\ 
\Leftrightarrow$ $ x \leq 1.$ \\ \\
Therefore, $h(x)$ attains it's maximum at $x=1$. It can be easily calculated
that
$ h(1) = q^\alpha y^\alpha . 2^\alpha   -c y^\alpha  - \beta q y $ 
$ = q^\alpha y^\alpha (2^\alpha - 2^{\alpha-1}) - \beta qy $ 
$ = q^\alpha y^\alpha 2^{\alpha-1} - \beta qy $. \\ 
Therefore, the equation (3) will be true if  
$  q^\alpha y^\alpha 2^{\alpha-1} - \beta qy - (c-1) (\alpha-1) \le 0 $.  \\ \\
Note that the equation $ q^\alpha y^\alpha 2^{\alpha-1} - \beta qy - (c-1) (\alpha-1)$ is a convex function in 
$y$, where $ 0 \le y \le 1 $. Therefore, it suffices to show this inequality is true for y = 0 and y = 1 .\\ \\
If $y = 0$, we need to show that $ - (c-1) (\alpha-1) \le 0 $, and if $y = 1$, we need to show that
$ (q^\alpha 2^{\alpha-1} - q q^\alpha 2^{\alpha-1}) - (c-1) (\alpha -1 ) \le 0 $. Both these inequalities are
trivially true since $\alpha, c, q > 1$.\\

\noindent \textbf {Case 6: $s_{a} > 0 , s_{o} = 0 , g_{0} > s* $} \\ \\
Note that $\frac {d\phi} {dt_{0}}  = \beta (\alpha g_{0}^{\alpha-1}(-s_{a}) + (\alpha-1)g_{0}^\alpha)$. 
Therefore, we need to show that \\ \\
$ s_{a}^\alpha + g + \beta (\alpha g_{0}^{\alpha-1}(-s_{a}) + (\alpha-1)g_{0}^\alpha) \le 0. \qquad
\qquad \qquad \qquad \qquad (4)$\\ \\
Since $g_{0} > s* $, it implies that $\rho > s* $ and hence
  $ qg_{0} \le s_{a} \le qg_{0} + qs* $ by Lemma \ref{lemma7}.\\ \\
Consider $ s_{a} = q g_{0} $. Then, the inequality (4) becomes: \\ \\ 
$ q^\alpha g_{0}^\alpha + g - \beta \alpha q g_{0}^{\alpha} + \beta (\alpha-1) g_{0}^\alpha \le 0 $ \\
$ \Leftrightarrow (q^\alpha -\beta \alpha q + \beta (\alpha -1)) g_{0}^\alpha + g \le 0 $ \\
 $ \Leftrightarrow q^\alpha (1 - \alpha 2^{\alpha-1})g_{0}^\alpha + g \le 0 $. \\ \\
The inequality above would be true if $q^\alpha (1 - \alpha 2^{\alpha-1}) (s*)^\alpha + g \le 0 \Leftrightarrow q^\alpha (1 - \alpha 2^{\alpha -1 }) + \alpha - 1 \le 0$ which can be easily seen to be true. \\ \\
Next, consider $s_{a} = qg_{0} + qs*$ and $g_{0} = x s*$. (Note that $x \geq 1$.) Then, the inequality (4) becomes 
\\ \\ 
$ q^\alpha (1+x)^\alpha - \beta \alpha q (1+x) x^{\alpha-1} + \beta (\alpha -1 ) x^\alpha + \alpha -1 \le 0. $ \\
$ \Leftrightarrow q^\alpha (1+x)^\alpha - \beta \alpha x^\alpha -2 \beta \alpha x^{\alpha-1} + \beta x^{\alpha-1} + \alpha -1 \le 0 $. \\  \\
Since $ -2 \beta \alpha + \beta  < 0  \Rightarrow   -2 \beta \alpha + \beta  \ge -2 \beta \alpha x^{\alpha-1} + \beta x^{\alpha-1} $ for $ x \ge 1 $, we will be done with the proof if $q^\alpha (1+x)^\alpha - \beta \alpha  x^\alpha  -2 \beta \alpha + \beta + \alpha -1 \le 0$.  \\  \\
Let $ h(x) = q^\alpha (1+x)^\alpha - \beta \alpha x^\alpha  -2 \beta \alpha + \beta + \alpha -1 $. Therefore, we obtain that $ h'(x) = \alpha q^\alpha (1+x)^{\alpha-1} - \alpha^2 \beta x^{\alpha-1} = 0 \Rightarrow $
$ x = \frac{1}{2 \alpha^{\frac {1}{\alpha-1}} - 1} < 1 $. It shows that $h(x)$ is maximum at $ x =1 $ for $  x\ge 1$. The proof is completed by observing that
$h(1) = q^\alpha 2^\alpha - 3 \beta \alpha + \beta + \alpha - 1 = q^\alpha 2^{\alpha - 1} (2 - 3 \alpha) + \beta + \alpha - 1 = - q^\alpha 2^{\alpha - 1} (3 \alpha - 1) + \alpha - 1 < 0.$ \\ 

\noindent \textbf {Case 7: $ s_{a} > 0 , s_{o} = 0 , g_{0} < s* $} \\ \\
Observing that $\frac {d\phi} {dt_{0}}  = \beta (s*)^{\alpha-1} (-s_{a})$, we need to show that \\ \\
$ s_{a}^\alpha + g - \beta (s*)^{\alpha-1}s_{a} \le 0 $. \\ \\
Consider the case $\rho < s* $. In this case, $s_{a}$ is equal to $s*$. On substituting 
$s_{a} = s* $, the above inequality reduces to $\alpha \le \beta $. \\  \\
Next, consider the other case $\rho \ge s* $. From Lemma \ref{lemma7}, we obtain $s_{a} \le qg_{0}+qs* $. 
As $s_{a} \neq 0$, it is greater than or equal to $s*$. Therefore, $s* \le s_{a}\le qg_{0}+qs*$. We have already
proven for the case $s* = s_{a}$ above. Due to convexity, it only remains to prove the above inequality for 
$ s_{a}= qg_{0} +qs*$. Substituting $ s_{a}= qg_{0} +qs*$, we can rewrite the inequality above as
$ q^\alpha (g_{0} +s*)^\alpha + g - \beta (s*)^{\alpha-1} q (g_{0} +s*) \le 0 $.
Let $g_{0} = x s*$. (Note that $ x  \le 1 $). We can then write the inequality 
as $ q^\alpha (1+x)^\alpha -\beta q (1+x) + \alpha - 1 \le 0 $. As the left hand side of this inequality is a convex function and since $ 0 \le x \le 1 $, it suffices to prove the inequality for $x = 0$ and $x =1$. \\ \\
For x = 0, $ q^\alpha -  \beta q  + \alpha - 1 \le 0 \Leftrightarrow$ 
$ (q^\alpha - q \beta) + (-q \beta + \alpha - 1) \le 0 $ which can be easily seen. 
For x = 1, 
$ q^\alpha 2^\alpha - 2q \beta + \alpha - 1 \le 0 \Leftrightarrow$  
$ ( q^\alpha 2^\alpha - 2 \beta ) -2 (1- \frac{1}{\alpha}) \beta + \alpha -1 \le 0 \Leftrightarrow$ 
$ q^\alpha (2^\alpha - 2 . 2^{\alpha -1})  -  2 (1- \frac{1}{\alpha}) \beta + \alpha -1 \le 0 \Leftrightarrow$ 
$ -2 (1- \frac{1}{\alpha}) \beta + \alpha -1 \le 0 \Leftrightarrow \alpha \le 2 \beta $ which can again be seen easily. \\ 

\noindent \textbf {Case 8: $ s_{a} = 0 , s_{o} > 0 $} \\ \\
Note that the property of algorithm SqOA implies that 
$s_{a} =  0 \Rightarrow \rho <  s* \Rightarrow g_{0} < s* $.
Since $\frac {dE_{a}(t)} {dt} = 0$, $\frac {dE_{o}(t)} {dt} = s_{o}^\alpha + g$, and 
$\frac {d\phi} {dt} = \beta (s*)^{\alpha -1} s_{o}$, we need to show that the following inequality is true. \\ 
\\ 
$ \beta (s*)^{\alpha -1} s_{o} \le c s_{o}^\alpha + cg  $ \\ \\
Let $ s_{o} = x s* $. Then, the above inequality can be rewritten as $x \le x^\alpha + \alpha - 1 $ 
$ \Leftrightarrow x - x^\alpha \le \alpha - 1$.  
Differentiating $ x - x^\alpha $, we see that 
it is maximized at $ x = (\frac{1}{\alpha})^{\frac{1}{\alpha-1}} $. It suffices to show  that $(\frac{1}{\alpha})^{\frac{1}{\alpha-1}} (1 - \frac{1}{\alpha}) \le \alpha -1 $ 
$ \Leftrightarrow (\frac{1}{\alpha})^{\frac{1}{\alpha-1}} \le \alpha $ which is true since $\alpha > 1$.\\ 

\noindent \textbf {Case 9: $ s_{a} =s_{o} =0 $} \\ \\
As observed above, the property of algorithm SqOA implies that 
$s_{a} =  0 \Rightarrow \rho <  s* \Rightarrow g_{0} < s* $. Therefore, 
$ \frac{d\phi}{dt_{0}} = \beta (s*)^{\alpha - 1} (0 -  0) = 0$. Note that 
$\frac {dE_{a}(t)} {dt} = \frac {dE_{o}(t)} {dt} = 0$ since $s_{a} =s_{o}= 0$.
It completes the proof for this case. \\

\noindent \textbf {Case 10: $g_0 = s*$} \\ \\
Since $g_{0} = s*$, there exists an interval $[t_{0},t_{0}+dt_{0}]$ such that either 
$\hat{g_{0}} = \max \{s*,g_{0}\} = g_{0}$ or $\hat{g_{0}} = \max \{s*,g_{0}\} = s*$
at any point of time in $[t_{0},t_{0}+dt_{0}]$. 
In the former case, the proof corresponds to one of the cases $1, 2, 3, 4, 6$ (depending on the values 
of $s_{a}, s_{o}$ and $k$) because $\frac {d\phi} {dt}, \frac {dE_{a}(t)} {dt}, \frac {dE_{o}(t)} {dt}$
would exactly be the same in the corresponding case. In the later case, the proof corresponds to case $5$ or case $7$ 
(again depending on the values of $s_{a}$ and $s_{o}$) for the same reason.

\section{Conclusion}
In this paper, we have provided a new online algorithm SqOA which improves  the competitive ratio for energy from  $\{4, 2 + {\alpha}^{\alpha}\}$ (Han. et al. [8]) to max $\{4, 2 + (2-1/{\alpha})^\alpha 2^{\alpha-1}\}$. 
We believe that this problem requires further attention, as it is both academically interesting
and has practical applications. A particularly interesting research direction would be to analyse SqOA in a multiprocessor environment.  Another research direction can be to analyse SqOA is a speed-bounded model.

\small

\end{document}